\documentclass[sn-mathphys]{sn-jnl}

\usepackage{graphicx}
\usepackage{amsmath, amssymb, amsthm}
\usepackage[authoryear]{natbib}
\bibpunct{(}{)}{;}{a}{}{,}
\usepackage{url}
\usepackage{enumitem}
\usepackage{amssymb}

\theoremstyle{plain}
\newtheorem{theorem}{Theorem}

\newtheorem{definition}{Definition}

\theoremstyle{remark}

\begin{document}

\title[Spacetime singularities and incompleteness]
{Spacetime singularities and incompleteness: epistemic and ontological remarks}

\author{\fnm{Gustavo E.} \sur{Romero}}
\affil{\orgdiv{Instituto Argentino de Radioastronomía (IAR)},
       \orgname{CONICET-CIC-UNLP},
       \orgaddress{\street{Cno. General Belgrano Km 40, C.C. No. 5},
       \city{Villa Elisa},
       \postcode{1894},
   \state{Buenos Aires},
    \country{Argentina}}}

\email{gustavo.esteban.romero@gmail.com}

\abstract{I argue that spacetime singularities entail no ontological commitment to material entities. First, I show that Penrose's singularity theorem is best understood as a theorem of incompleteness---it demonstrates the failure of specific spacetime models within General Relativity (or any theory incorporating the Raychaudhuri equation) under certain general conditions. Although this has been done before, I adopt a novel approach based on differentiating between physical and purely formal assumptions in the axiomatic formulation of general relativity. Next, I compare Penrose's result with Gödel's incompleteness theorem, highlighting key similarities and differences. Finally, I draw philosophical conclusions regarding the limits and prospects of our epistemic reconstructions of the physical world.}

\keywords{Ontology, Spacetime, Singularities, Incompleteness theorems} 

\maketitle 

\section{Introduction}
\label{sec1}

\citet{Penrose1965} singularity theorem in General Relativity (GR) and \citet{Godel1931} incompleteness theorems in mathematics are seemingly unrelated, yet both address limitations and inherent boundaries within their respective fields. Penrose result, in particular, has had a significant impact on astrophysics; Penrose himself was awarded the Nobel Prize for his contributions to the theory of General Relativity, particularly for his singularity theorem. Extending this theorem to a cosmological setting \citep{Hawking1970} has also had an important influence on discussions of the early universe. The theorem was generalized into the Borde–Guth–Vilenkin (BGV) theorem \citep{Borde2003} to encompass a broader range of spacetime models. The original theorem and the later related ones, however, have often been misinterpreted in scientific and philosophical literature. 

These theorems are sometimes presented as \textit{existence} theorems, demonstrating that spacetime singularities are physical entities whose existence is unavoidable in our world. Let us see some examples from scientific, philosophical, and theological publications that illustrate this way of referring to singularities.

P. Joshi in his well known monograph \textit{Gravitational Collapse and Spacetime Singularities} \citep{Joshi2007} writes (p. 150):

\begin{quotation}
It was shown ... that a spacetime will admit singularities within a very general framework, provided that it satisfies certain reasonable assumptions, such as the positivity of energy, a suitable causality assumption, and a condition such as the existence of trapped surfaces. It thus follows that the spacetime singularities form a general feature of the relativity theory. In fact, these considerations also ensure the existence of singularities in other theories of gravity that are based on a spacetime manifold framework and satisfy the general conditions stated above.
\end{quotation}

A casual reader of this text could easily interpret it as saying that singularities are physical objects. A detailed reading, however, clarifies that the discussion is actually about the conditions under which a spacetime model is geodesically incomplete. This type of imprecise writing is present everywhere. For instance, see \citet{Joshi2014} (p. 409, abstract):

\begin{quotation}
The nature and existence of singularities are considered which indicate the formation of super ultra-dense regions in the universe as predicted by the general theory of relativity. Such singularities develop during the gravitational collapse of massive stars and in cosmology at the origin of the universe. Possible astrophysical implications of the occurrence of singularities in the spacetime universe are indicated.
\end{quotation}

This paragraph seems to suggest that singularities are physical entities that form, develop, and have some properties, such as localization or density. This way of talking about singularities is deeply misleading and, worse, it can be even found among specialists. One might think scientists do not pay much attention to the meaning of the words they use, but their math corrects their loose presentations. Perhaps, but unfortunately, many philosophers and theologians take scientists' words at face value. For instance, \citet{Craig1992} (p.240) writes:

\begin{quotation}
... when the body's radius equals zero, a real, and not merely coordinate, singularity occurs. Now the initial cosmological singularity was certainly a real singularity. But that does not settle the question of its ontological status.

The ontological status of the Big Bang singularity is a metaphysical question concerning which one will be hard-pressed to find a discussion in scientific literature. The singularity does not exist in space and time; therefore it is not an event. Typically it is cryptically said to lie on the boundary of space-time. But the ontological status of this boundary point is virtually never discussed.
\end{quotation}

Later on the same page, Craig suggests that ``a good case can be made for the assertion that this singular point is ontologically equivalent to nothing." For Craig, at least here, singularities have the ontological status of nothingness. What this means remains unclear. Quentin \citet{Smith1992} (p. 222) seems to be at the other extreme:

\begin{quotation}
The unpredictability of the singularity implies that we should expect a chaotic outpouring from it. This expectation is in line with big bang cosmologists' representation of the early stages of the universe, for these states are thought to be maximally chaotic (involving complete entropy). The singularity emitted particles with random microstates, and this resulted in an over-all macrostate of thermal equilibrium.
\end{quotation}

Singularities, we are told, are not only real but also emit thermal radiation! Smith, in turn, quotes Hawking, whose writing on the topic is usually confusing \citep{Hawking1976} (p. 2460):

\begin{quotation}
A singularity can be regarded as a place where there is a breakdown of the classical concept of space-time as a manifold with a pseudo-Reimannian metric. Because all known laws of physics are formulated on a classical space-time background, they will all break down at a singularity. This is a great crisis for physics because it means that one cannot predict the future: One does not know what will come out of a singularity.
\end{quotation}

As a final example I will quote form a recent paper \citep{Fiorini2025} (p. 1):

\begin{quotation}
Actually, according to the common lore on the subject, naked singularities might produce uncontrolled bursts of information which ultimately will affect the Cauchy evolution, preventing any prediction on the basis of data fixed at a given partial Cauchy surface.
\end{quotation}

This is in line with Smith claim given above about that singularities are entities capable of emitting various things; this time it seems to be...information!

These quotes illustrate some confusions surrounding the epistemic and ontological status of spacetime singularities. In this paper, I argue that spacetime singularities are not physical entities but artifacts of incomplete theoretical representation. Specifically, certain spacetime models are singular because they are defective; they fail as complete representations of physical reality. This position is not new (e.g., \citealt{Ellis1977}; \citealt{Clarke1993}; \citealt{Earman1995}; \citealt{Curiel1999}; \citealt{Romero2013,Romero2021}; \citealt{Meijer2025}; \citealt{Crowther2021}; \citealt{Lehmkuhl2026}) and is now widely held among philosophers of science and many physicists. My primary contribution is to formalize this position through a semantic analysis of General Relativity, demonstrating that singularity theorems are more accurately interpreted as incompleteness theorems rather than existence theorems. This formal-semantic approach, not explicitly adopted by the authors cited, serves to clarify and strengthen the case.

I will then compare these singularity theorems with Gödel's incompleteness theorems, tracing their conceptual parallels and critical differences. Gödel's and Penrose's theorems represent two of the most fundamental incompleteness results in science. A systematic comparison of their structure and philosophical implications is a well-motivated and pending task. Finally, I will draw broader conclusions about the inherent limitations of our theoretical models and their capacity to represent the material world.

First, I will introduce a compact formulation of GR and distinguish its semantic aspects from purely formal ones.

\section{Spacetime and its formal representation}
\label{sec2}

\subsection*{Primitive concepts and definitions}

I consider GR as a theory of spacetime, matter, and their interactions. Spacetime is assumed to be a physical entity, as well as those fields that exist associated with it. The generating basis of primitive concepts of GR is (see \citealt{Combi2018} for a similar approach):

\begin{equation}
\mathcal{B}= \lbrace {\rm ST}, \Sigma,\mathcal{M}, \lbrace \mathbf{g} \rbrace , \lbrace \Theta \rbrace, \lbrace \phi \rbrace , \kappa \rbrace.
\end{equation}

The meaning of these symbols will be determined by a set of axioms. In a nutshell, the primitive terms are interpreted as follows: $ \rm{ST} $ represents the physical entity we call \textit{spacetime}, $\Sigma$ is the set of other physical entities which exist \textit{on} spacetime, $\mathcal{M}$ is a manifold, $\lbrace \mathbf{g} \rbrace$ is a metric tensor field defined on the manifold, $\lbrace \mathbf{T} \rbrace$ is a tensor representing the energy-momentum of physical systems, $\lbrace \phi \rbrace$ is a family of isometries, and $\lbrace \kappa \rbrace$ is a physical constant. The corresponding axioms, which fully interpret these primitive terms, are of three kinds: purely mathematical, semantical, and physical \citep{Bunge1967}. Mathematical axioms introduce formal conceptual artifacts according to specific mathematical background theories. Semantical axioms establish connections between mathematical symbols and structures, and physical entities and their properties. The main semantic relations we will employ are \textit{denotation} and \textit{representation} (for a formal characterization, see \citet{Bunge1974} and \citet{Romero2018}).

Before giving the axioms, we set out the main definitions of Weitzenb\"ock tensors used in the theory:

\begin{itemize}
	\item[D$_{1}$] $R^{\rho}_{\mu \lambda \nu}:= \partial_{\lambda} \Gamma^{\rho}_{\mu \nu}- \partial_{\nu} \Gamma^{\rho}_{\mu \lambda} + \Gamma^{\rho}_{\sigma \lambda} \Gamma^{\sigma}_{\mu \nu}- \Gamma^{\rho}_{\sigma \nu} \Gamma^{\sigma}_{\mu \lambda}$ is the Riemann tensor.
	\item[D$_{2}$] $R_{\mu \nu}:= R^{\rho}_{\mu \rho \nu}$ is the Ricci tensor.
	\item[D$_{3}$] $R:= R^{\mu}_{\mu}$ is the Ricci scalar.
	\item[D$_{4}$] $G_{\mu \nu}:= R_{\mu \nu} - \frac{1}{2} g_{\mu \nu} R$ is the Einstein tensor.
\end{itemize} 

Here $g_{\mu\nu}$ are the componets of the metric tensor $\lbrace \mathbf{g} \rbrace$ and the $ \Gamma^{\rho}_{\mu \nu}$ are the usual Christoffel symbols (of the second kind) formed with the first derivatives of the metric: $\Gamma^{\lambda}_{\mu\nu} = \frac{1}{2} g^{\lambda\sigma} \left( \partial_{\mu} g_{\sigma\nu} + \partial_{\nu} g_{\sigma\mu} - \partial_{\sigma} g_{\mu\nu} \right)$.

\subsection*{Axioms}
The theory is generated by a set of ten axioms 
$$  {\rm GR} = \left(\bigwedge \limits_{i}^{10} \mathbf{A}_i, \,  \vdash \right).$$ Here, $\vdash$ is the  logical entailment. We present the axioms divided in three groups. In each axiom, it is indicated whether it is mathematical (M), semantic (S), or physical (P).

\textbf{Group I: Spacetime}

\begin{itemize}
	\item[A$_1$] (M) $\mathcal{M}$ is a Hausdorff para-compact, $C^{\infty}$, 4-dimensional, real pseudo-Riemannian manifold.
	\item[A$_2$] (M) $\lbrace \textbf{g} \rbrace $ is a family of rank-2 metric tensors, symmetric, and with Lorenztian signature. All minor principals of the metric tensor $g^{\mu \nu}$ are negative.
	\item[A$_3$] (M) $\lbrace \phi \rbrace $ is a family of isometries\footnote{An isometry (or congruence) is a distance-preserving transformation between metric spaces. Formally,  \textbf{Definition (Isometry).}  
Let $(M, g)$ and $(N, h)$ be Riemannian manifolds. A smooth map $\varphi : M \to N$ is called an \emph{isometry} if it is a diffeomorphism and preserves the Riemannian metric, that is, for all $p \in M$ and for all $v, w \in T_p M$, we have
\[
h_{\varphi(p)}\left( d\varphi_p(v), d\varphi_p(w) \right) = g_p(v, w),
\]
where $d\varphi_p : T_p M \to T_{\varphi(p)} N$ is the differential (pushforward) of $\varphi$ at $p$. In this case, the pullback metric $\varphi^* h$ equals $g$, and the manifolds $(M, g)$ and $(N, h)$ are said to be \emph{isometric}.}: $\phi^{*} \textbf{g}=\textbf{g}$.
	\item[A$_4$] (S) Spacetime ${\rm ST}$ is the physical system\footnote{By stating that spacetime is a physical system, I am adopting a substantivalist view: I consider spacetime to be a material entity capable of changing, either by itself as in de Sitter empty case, or through interaction with matter. For more on this, see \citet{Combi2022}, \citet{Romero2024}, and references therein.} represented by the equivalence class of isometric diffeomorphisms of a given metric, i.e. ${\rm ST}$ $\widehat{=}$ $\{x : x=  {\rm Iso} (\mathcal{M}, \textbf{g} )\}$.
\end{itemize}

\textit{Remarks}. For simplicity, I will write ${\rm ST}\widehat{=}(\mathcal{M}, \textbf{g} )$ in the case of A$_4$. The symbol `$\widehat{=}$' stands for the formal relation of representation \citep{Romero2018}. \\

\textbf{Group II: Matter}

\begin{itemize}
	\item[A$_{5}$] (M) $\Sigma$ is a non-empty set of objects $\sigma \in \Sigma$.
	\item[A$_{6}$] (S) There is an element $\square \in \Sigma$ which designates the absence of physical system. For all $\sigma \in \Sigma$ other than $\square$, $\sigma$  denotes a physical system different from spacetime\footnote{As we shall see, spacetime and matter are related by Einstein's equations. The equations impose a relation that is not only numerical but also semantical, i.e. spacetime should be considered as a material system, otherwise, it could not be affected by changes in the elements of $\Sigma$, which are also material. On the materiality of spacetime see \citet{Combi2022} and references there in.}.
	\item[A$_{7}$] (M) For each $\sigma \in \Sigma$ there is a symmetric 2-rank tensor field $\Theta$. In particular, there is a one-to-one correspondence between $\square \in \Sigma$ and the null tensor field $\Theta=0$
	\item[A$_{8}$] (S) $\Theta$ represents the energy-momentum tensor of the physical system $\sigma$.
\end{itemize}

\textit{Remarks}. Group I of axioms characterize spacetime while Group II all those physical systems other than spacetime. I note that the energy-momentum of the latter is obtained from the Lagrangian density through\footnote{This can be proven as a theorem; see Appendix A1. } $$\Theta_{\mu\nu}= \frac{2}{\sqrt{-g}} \frac{\delta \cal{L}_{\rm M}}{\delta g^{\mu\nu}}, $$
where $g=|| g^{\mu\nu}||$ is the determinant of the metric tensor, and $\cal{L}_{\rm M}$ is the effective Lagrangian density of the matter fields. Since the energy-momentum of physical systems depends on the metric of spacetime, the latter is independent of the former. This feature is called background independence: spacetime can exist even in the absence of matter fields, but matter fields cannot exist without spacetime \citep{Lehmkuhl2011}. \\

\textbf{Group III: Dynamics}

\begin{itemize}
	\item[A$_{9}$] (P) $\kappa \in \mathbb{R}$. Here $\kappa$ has dimensions of $[\kappa]= L \: M \: T^{-2}$.
	\item[A$_{10}$] (P) Spacetime and physical systems interact according the Einstein's equations (EFEs):
	\begin{equation*}
	G^{\mu \nu}=\frac{1}{2\kappa} \Theta^{\mu \nu}.
	\end{equation*}
\end{itemize}

\textit{Remarks}. The first axiom of this group introduces the dimensionality of the constant $\kappa$. Its value must be determined empirically. The final axiom is the most important because it establishes the constraints that the properties of the different referents of the theory impose on each other. In other words, it expresses a physical law. Changes in matter fields produce changes in the metric (and, consequently, the curvature) of spacetime. This, in turn, determines the dynamics of physical systems.

While $(\mathcal{M}, \textbf{g} )$ represents a general spacetime, any specific spacetime model also requires the specification of the involved matter fields and the initial conditions necessary to solve the EFEs. Therefore, a spacetime model is represented by

$$ {\rm Model}_{\rm ST} \widehat{=}(\mathcal{M}, \textbf{g}, \Theta, \{ c_i\} ),$$

\noindent where $c_i$ for $\; i=1,2,...$ are a set of conditions for solving EFE. 
  
Let us now how a look at the singularity theorems.

\section{Singularity theorems as incompleteness theorems}
\label{sec3}

I will discuss Penrose's (1965) theorem. Hawking and Penrose (1970) and the BGV (2003) theorems are similar but apply to cosmological spacetime models and predict past incompleteness for expanding spaces (see \citet{Senovilla1998} for a general review). 

\begin{theorem}[Penrose 1965]
Let $(\mathcal{M}, \textbf{g})$ be a four-dimensional, time-oriented Lorentzian manifold representing a specific spacetime model. Suppose the following conditions hold:

\begin{enumerate}
    \item  \emph{Null Energy Condition:} For all null vectors $k^\mu$, the Ricci tensor satisfies
    \[
    R_{\mu\nu} k^\mu k^\nu \geq 0.
    \]

    \item \emph{Global Hyperbolicity:} The spacetime $(\mathcal{M}, \textbf{g})$ contains a non-compact Cauchy surface.

    \item  \emph{Trapped Surface:} There exists a compact, spacelike, closed two-dimensional surface $\mathcal{T} \subset \mathcal{M}$ such that both future-directed null congruences orthogonal to $\mathcal{T}$ have everywhere negative expansion.
\end{enumerate}

Then the spacetime $(\mathcal{M}, \textbf{g})$ is \emph{null geodesically incomplete}: there exists at least one inextendible, future-directed null geodesic with finite affine length.
\end{theorem}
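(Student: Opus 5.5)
The plan is the classical argument by contradiction: assume $(\mathcal{M},\textbf{g})$ is null geodesically complete and show that the boundary of the causal future of $\mathcal{T}$ would then have to be a compact topological 3-manifold. Global hyperbolicity forbids this, because it forces that boundary to be homeomorphic to the non-compact Cauchy surface.

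\emph{Step 1 (focusing).} Let $\theta$ be the expansion of either future-directed null congruence orthogonal to $\mathcal{T}$, with affine parameter $\lambda$. For a null congruence the Raychaudhuri equation reads
\[
\frac{d\theta}{d\lambda} = -\tfrac{1}{2}\theta^2 - \hat\sigma_{ab}\hat\sigma^{ab} - R_{\mu\nu}k^\mu k^\nu .
\]
The vorticity term vanishes because the congruence is surface-orthogonal. By condition 1, $d\theta/d\lambda \le -\tfrac12\theta^2$. Since $\mathcal{T}$ is compact, $\theta$ is bounded above on it by some $\theta_0<0$. Integrating gives $\theta\to-\infty$ at some $\lambda \le 2/|\theta_0|$. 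Completeness means every such generator is defined up to that parameter value. Standard index-form or Jacobi-field arguments then give a point conjugate to $\mathcal{T}$ on each generator within affine distance $2/|\theta_0|$. Past such a point the generator enters the chronological future $I^+(\mathcal{T})$.

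\emph{Step 2 (compactness of the boundary).} Global hyperbolicity gives causal simplicity, so $J^+(\mathcal{T})$ is closed and $\partial J^+(\mathcal{T}) = J^+(\mathcal{T})\setminus I^+(\mathcal{T})$. Every point of this boundary lies on a null geodesic from $\mathcal{T}$ orthogonal to $\mathcal{T}$ that has no conjugate point before it. By Step 1, $\partial J^+(\mathcal{T})$ is the image of a compact set, namely $\mathcal{T}\times\{\text{2 directions}\}\times[0,2/|\theta_0|]$, under the continuous normal exponential map. Being closed, it is therefore compact. Separately, standard causal theory shows that $\partial J^+(\mathcal{T})$ is an achronal, closed, embedded $C^0$ hypersurface without boundary.

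\emph{Step 3 (contradiction).} Let $\Sigma$ be the Cauchy surface. Fix a smooth timelike vector field, which exists by time-orientation, and project $\partial J^+(\mathcal{T})$ into $\Sigma$ along its integral curves. Each such curve meets $\Sigma$ exactly once, by the Cauchy property, and meets the achronal boundary at most once. The projection is therefore a continuous injection from a compact 3-manifold without boundary into the connected 3-manifold $\Sigma$. By invariance of domain its image is open, and by compactness it is closed. Hence the image is all of $\Sigma$, so $\Sigma$ is compact, contradicting condition 2.

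The main obstacle is Step 2. One must justify the structural claims that every point of $\partial J^+(\mathcal{T})$ lies on an orthogonal null generator without conjugate points, and that the boundary is a topological hypersurface. I would take these from the standard causal-structure lemmas (Hawking–Ellis, Prop.~4.5.14 and 6.3.1), whose proofs use global hyperbolicity to get limit curves. Step 1 also needs some care: the passage from $\theta\to-\infty$ to a genuine focal point, and then to the generator leaving the boundary, relies on the second-variation argument.
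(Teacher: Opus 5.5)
Your proposal is correct and follows the same route as the paper's proof: Raychaudhuri focusing gives a point conjugate to $\mathcal{T}$ on every orthogonal null generator within affine parameter $2/|\theta_0|$, so $\partial J^+(\mathcal{T})$ is compact, which contradicts global hyperbolicity with a non-compact Cauchy surface. Your Step 3 projects the compact achronal boundary into $\Sigma$ along a timelike vector field and invokes invariance of domain; this supplies the precise contradiction the paper only gestures at when it asserts that $J^+(\mathcal{T})$ ``must be non-compact'', since non-compactness of $J^+(\mathcal{T})$ itself would not contradict compactness of its boundary.
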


\textbf{Proof}\\

I outline the main steps of the proof, which relies on geometric analysis of null geodesic congruences and global causality theory.\\

\textit{1. Raychaudhuri Equation}\\

Consider a congruence of future-directed null geodesics orthogonal to the trapped surface $\mathcal{T}$ with tangent vector field $k^\mu$. The Raychaudhuri equation for the expansion scalar $\theta$ is (see, e.g. \citet{Poisson2004}):

\[
\frac{d\theta}{d\lambda} = -\frac{1}{2}\theta^2 - \sigma_{\mu\nu} \sigma^{\mu\nu} + \omega_{\mu\nu} \omega^{\mu\nu} - R_{\mu\nu} k^\mu k^\nu,
\]
where $\lambda$ is the affine parameter, $\sigma_{\mu\nu}$ is the shear tensor, and $\omega_{\mu\nu}$ is the twist (vorticity) tensor.

Since the congruence is hypersurface orthogonal (being normal to $\mathcal{T}$), the vorticity vanishes: $\omega_{\mu\nu} = 0$. Also, 
$\sigma_{\mu\nu} \sigma^{\mu\nu} \geq 0$ and, by assumption, $R_{\mu\nu} k^\mu k^\nu \geq 0$. Therefore:
\[
\frac{d\theta}{d\lambda} \leq -\frac{1}{2} \theta^2.\\
\]

\textit{2. Focusing in finite affine parameter}\\

If $\theta_0 = \theta(0) < 0$ initially (which holds on the trapped surface), this differential inequality implies that $\theta \to -\infty$ within a finite affine parameter:
\[
\lambda \leq \lambda_{*} = \frac{2}{\|\theta_0\|}.
\]
This implies the occurrence of a \emph{conjugate point} (caustic) within finite affine length along each geodesic in the congruence.\\

\textit{3. Global hyperbolicity and compactness}\\

Let $J^+(\mathcal{T})$ denote the causal future of the trapped surface. Since every null geodesic orthogonal to $\mathcal{T}$ develops a conjugate point in finite affine time, $J^+(\mathcal{T})$ has a compact boundary in $(\mathcal{M}, \textbf{g})$.

However, global hyperbolicity and the existence of a non-compact Cauchy surface imply that $J^+(\mathcal{T})$ must be non-compact. This leads to a contradiction unless some of the null geodesics generating $\partial J^+(\mathcal{T})$ are \emph{incomplete}.\\

\textit{4. Conclusion: geodesic incompleteness}\\

The contradiction implies that not all null geodesics orthogonal to $\mathcal{T}$ can be complete. Therefore, at least one future-directed null geodesic must be incomplete, concluding the proof.

\qed

\textbf{Remarks}

\begin{itemize}
  \item From the very hypothesis of the theorem it is clear that it refers to spacetime models in GR and \textit{not} to spacetime itself. 
  \item The proof is entirely geometric, proceeding from the assumptions to the conclusion that certain null geodesic curves never reach their causal future\footnote{Note that as formulated, the energy conditions are purely related to geometrical quantities, namely the Ricci tensor and null vectors, and not to any equation of state for matter.}.
    \item The theorem does not assert that curvature invariants diverge; the singularity is understood in the sense of \emph{geodesic incompleteness} of the spacetime model.
    \item The existence of a trapped surface signals gravitational collapse, and under very general assumptions, leads inevitably to a breakdown of the predictive potential of the model. There is no implication of acausal behavior in spacetime. Actually, any violation of strict causality would invalidate the proof, which relies on it (through global hyperbolicity).
    \item The result is foundational in general relativity, showing that singular models are unavoidable in Einstein's theory, not mere artifacts of symmetry. 
    \item GR, unlike other theories, allows to limit its own predictive range. This occurs because being a theory of spacetime, it cannot represent its referent if all null geodesics converge forming a caustic. In such conditions, the EFE cannot be formulated because the energy-momentum tensor of any field becomes impossible to characterize as it depends of the metric coefficients, which are not defined. 
\end{itemize}

\textbf{Summing up}: \\

Under specific conditions, such as the presence of closed trapped surfaces and a non-compact Cauchy hypersurface, spacetime models must be geodesically incomplete. This means that some paths (geodesics) through spacetime cannot be extended indefinitely. The region of spacetime not covered by the model is outside the predictive range of GR. Without changing the model, nothing can be asserted about the underlying physics. Singularities, I conclude, are not things but a feature of the spacetime model: its incompleteness.

\section{Manifold incompleteness}
\label{sec4} 

The most important implication of Penrose theorem for spacetime models is that some minimal geometric assumptions lead to incomplete manifolds in the representation of spacetime. A complete manifold, in the context of Riemannian geometry, is a Riemannian (or pseudo-Riemannian) manifold where every geodesic can be extended indefinitely. If a manifold is complete, it means you can always reach a point by traveling along a geodesic starting from any point, regardless of how far you travel. There are no ``holes" or ``edges" that would prevent from extending the geodesic. In more technical terms we can adopt the following definition (e.g. \citet{Nash2011}):

\begin{definition}[Complete Riemannian manifold]
Let $(\mathcal{M}, \textbf{g})$ be a (pseudo)Riemannian manifold, where $\mathcal{M}$ is a smooth manifold and $\textbf{g}$ is a (pseudo)Riemannian metric on $\mathcal{M}$. The manifold $(\mathcal{M}, \textbf{g})$ is said to be \emph{complete} if the metric space $(\mathcal{M}, d_g)$ is complete, where $d_g$ is the distance function induced by the (pseudo)Riemannian metric $\textbf{g}$. That is, every Cauchy sequence in $(\mathcal{M}, d_g)$ converges to a point in $\mathcal{M}$.
\end{definition}

Equivalently, $(\mathcal{M}, \textbf{g})$ is complete if and only if every maximal geodesic $\gamma : [0, a) \to \mathcal{M}$ with $a < \infty$ can be extended to a geodesic $\tilde{\gamma} : [0, b) \to \mathcal{M}$ with $b > a$.

By the Hopf--Rinow theorem \citep{Hopf1931}, the following statements are equivalent:
\begin{enumerate}
    \item $(\mathcal{M}, \textbf{g})$ is complete as a metric space;
    \item Every geodesic in $(\mathcal{M}, \textbf{g})$ can be extended to arbitrary parameter values, i.e., is defined on $\mathbb{R}$;
    \item Closed and bounded subsets of $\mathcal{M}$ are compact.
\end{enumerate}

An incomplete manifold, consequently, is a manifold $(\mathcal{M}, \textbf{g})$ where there exists at least one Cauchy sequence in $(\mathcal{M}, d_g)$ that does not converge to any point in $\mathcal{M}$.

Equivalently, by the Hopf--Rinow theorem, $(\mathcal{M}, \textbf{g})$ is incomplete if there exists a geodesic $\gamma : [0, a) \to \mathcal{M}$ for some $a < \infty$ which cannot be extended beyond $t = a$, i.e., $\gamma$ is inextendible although it reaches finite affine length.

Such manifolds are said to be \emph{geodesically incomplete}. This indicates that the spacetime model is defective, as the metric tensor $\mathbf{g}$ and its derivatives cannot be defined beyond a certain boundary $a$. As a result, fundamental properties of spacetime, such as curvature, as well as the behavior of physical fields, cannot be properly modeled in that region. 

The aim of any mathematical representation of a physical system is to assign mathematical functions or operators to the physical properties, thereby enabling quantitative predictions under well-defined conditions. Clearly, such predictions are not possible if the model is incomplete. There will be questions for which answers cannot be formulated, or truths about the world that cannot be established within the theory. Examples include questions such as: ``What is the final state of matter inside a black hole?" or ``What triggered the cosmological expansion of spacetime?" This does not mean that such questions are meaningless or inherently unanswerable; they are not \emph{insolubilia}. However, they cannot be answered within general relativity or any spacetime theory in which the singularity theorems apply. One possible route is to deny the applicability of the theorems' hypotheses—for example, by invoking violations of the energy conditions—but this would entail adopting an ontology involving exotic matter, dark fields, or other highly controversial entities\footnote{One might object that the dominant ingredient of our universe, dark energy, violates the energy condition. However, at large scales, the Borde-Guth-Vilenkin theorem shows that as long as the universe has undergone a period of net expansion, the initial singularity of the model remains. Furthermore, concerning black hole interiors, it is far from clear what physical processes dominate there. Whatever they may be, General Relativity cannot represent them, as the theory breaks down toward the caustic.}. A more reasonable approach might be to modify the theoretical framework in the domain where it fails. 

A similar situation to the one described above arises in an apparently unrelated field: formal systems theory. In this context, Gödel famously demonstrated a form of theoretical incompleteness that renders certain truths unprovable within a given system. Let us now examine this analogy more closely.

\section{Gödel theorem}
\label{sec5}

\begin{theorem}[Gödel, 1931]
Let $T$ be a formal system satisfying the following properties:

\begin{enumerate}
    \item $T$ is recursively enumerable (i.e., there is an algorithm listing all its theorems).
    \item $T$ is consistent (no contradiction can be derived within it).
    \item $T$ is sufficiently expressive to represent elementary arithmetic (e.g., Peano Arithmetic).
\end{enumerate}

Then there exists a statement $G_T$ in the language of $T$ such that:

\begin{enumerate}
    \item $G_T$ is true in the standard model of arithmetic.
    \item $G_T$ is \textit{not provable} in $T$ if $T$ is consistent.
\end{enumerate}

That is, $T$ is \textit{incomplete}: there are true statements that cannot be proven within the system.
\end{theorem}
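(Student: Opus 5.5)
The plan is the classical route: arithmetize syntax, build a self-referential sentence, and let consistency of $T$ block its proof. First, fix a Gödel numbering $\ulcorner\cdot\urcorner$ of the symbols, formulas and finite sequences of formulas of the language of $T$. Since $T$ is recursively enumerable, the relation $\mathrm{Prf}_T(x,y)$ (``$x$ codes a $T$-proof of the formula coded by $y$'') is recursive; strictly, r.e.\ theories are first replaced by recursively axiomatized ones via Craig's trick. Hypothesis 3 says $T$ contains enough arithmetic, and I read this as: $T$ extends Robinson's $Q$, which is what Peano Arithmetic provides. Then every recursive relation is representable in $T$. So there is a $\Delta_0$-definable (or at least $\Sigma_1$) formula $\mathrm{Prf}_T(x,y)$ that correctly decides each numerical instance. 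I then put $\mathrm{Prov}_T(y) := \exists x\, \mathrm{Prf}_T(x,y)$.

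Second, I prove the Diagonal Lemma. For every formula $\psi(y)$ with one free variable there is a sentence $\delta$ with
\[
T \vdash \delta \leftrightarrow \psi(\overline{\ulcorner \delta \urcorner}).
\]
To build it, take the recursive substitution function $\mathrm{sub}(n) = \ulcorner \chi(\overline{n}) \urcorner$ for the formula $\chi$ coded by $n$, and represent it in $T$. Let $\alpha(x) := \psi(\mathrm{sub}(x))$ and $\delta := \alpha(\overline{\ulcorner\alpha\urcorner})$. Applying the lemma to $\psi(y) := \neg\mathrm{Prov}_T(y)$ gives $G_T$ with $T \vdash G_T \leftrightarrow \neg \mathrm{Prov}_T(\overline{\ulcorner G_T\urcorner})$.

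Third, I derive the two conclusions. For unprovability, suppose $T \vdash G_T$. Then some $n$ codes a proof of $G_T$, and by representability $T \vdash \mathrm{Prf}_T(\overline n,\overline{\ulcorner G_T\urcorner})$, hence $T \vdash \mathrm{Prov}_T(\overline{\ulcorner G_T\urcorner})$. The fixed-point equivalence also gives $T\vdash \neg\mathrm{Prov}_T(\overline{\ulcorner G_T\urcorner})$, so $T$ is inconsistent. Hence, if $T$ is consistent, $T \nvdash G_T$. For truth, the formula $\mathrm{Prf}_T$ defines the real proof relation in the standard model $\mathbb N$, so $\mathbb N \models \neg\mathrm{Prov}_T(\overline{\ulcorner G_T\urcorner})$ exactly because no proof exists. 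The truth of $G_T$ in $\mathbb N$ then needs the fixed-point equivalence to hold in $\mathbb N$. That would follow immediately if $\mathbb N \models T$, but the statement does not assume this. So I instead check the biconditional directly: the diagonal construction uses only correctly representing formulas, and $Q$'s theorems are true in $\mathbb N$. Then $\mathbb N\models G_T$ follows.

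The main obstacle is the representability step: showing that every recursive relation, in particular $\mathrm{Prf}_T$ and $\mathrm{sub}$, is representable in a theory as weak as the hypothesis allows. This needs the $\beta$-function to code finite sequences, which rests on the Chinese Remainder Theorem, followed by an induction on the construction of the recursive functions. I would carry this out for $\Sigma_1$-completeness of $Q$ and take that as the lemma everything else uses. I would also note that the statement only needs consistency, not $\omega$-consistency. That is possible because the truth claim is made in $\mathbb N$, and the undecidability of $\neg G_T$ (which would require Rosser's trick) is not asserted.
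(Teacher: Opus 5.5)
Your proposal is correct and follows the same route as the paper's outline: Gödel numbering, a provability predicate $\mathrm{Prov}_T$, the diagonal lemma applied to $\neg\mathrm{Prov}_T$, consistency ruling out $T\vdash G_T$, and then the truth of $G_T$. You also make explicit two steps the paper leaves implicit: that $T\vdash G_T$ yields $T\vdash\mathrm{Prov}_T(\overline{\ulcorner G_T\urcorner})$ via $\Sigma_1$-completeness of $Q$, which is what turns the argument into a contradiction with mere consistency rather than with soundness; and that the fixed-point biconditional holds in $\mathbb N$ because it is already provable in the sound theory $Q$, even when $T$ itself is not sound.
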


{\bf Outline of the proof}.\\

The key idea of the proof is to construct a sentence that \emph{refers to itself}, asserting its own unprovability.\\

\textit{1. Arithmetization of syntax}\\

Gödel showed that one can encode:
\begin{itemize}
    \item Symbols, formulas, proofs, and sequences of formulas as natural numbers (so-called \emph{Gödel numbering}).
    \item Syntactic relations (e.g., ``$x$ is a proof of $y$'') as arithmetical predicates.
\end{itemize}
Thus, metamathematics can be expressed within arithmetic.\\

\textit{2. The provability predicate}.\\

Define a formula $\operatorname{Prov}_T(x)$ in $T$ such that:
\[
\operatorname{Prov}_T(x) \iff \text{``$x$ is the Gödel number of a provable sentence in $T$''}.
\]

This formula is representable in $T$ using arithmetic.\\

\textit{3. Self-Reference via diagonalization}.\\

Using the \emph{diagonal lemma}, construct a sentence $G_T$ such that:
\[
G_T \leftrightarrow \neg \operatorname{Prov}_T(\ulcorner G_T \urcorner)
\]
where $\ulcorner G_T \urcorner$ is the Gödel number of $G_T$.

Thus, $G_T$ says: ``$G_T$ is not provable in $T$.''\\

\textit{4. Consequences}.\\

Now consider the possible cases:
\begin{itemize}
    \item If $T \vdash G_T$, then $T$ proves that $G_T$ is not provable—contradiction (if $T$ is consistent).
    \item Therefore, $T \nvdash G_T$ if $T$ is consistent.
    \item But then $G_T$ is true (it correctly states that it is not provable).
\end{itemize}

Hence, $G_T$ is a true but unprovable sentence.\\

\textit{5. Interpretation}.\\

This demonstrates that $T$ is incomplete: it cannot prove all truths expressible in its own language. The theorem is not a flaw in logic, but a fundamental feature of formal systems powerful enough to include arithmetic.\\

\textbf{Remarks}.

\begin{itemize}
    \item The proof is \textit{purely syntactic}: it relies solely on formal manipulations of symbols according to specified rules.
    
    \item A second incompleteness theorem establishes that $T$ cannot prove its own consistency.
    
    \item The theorem applies to a broad class of formal systems, not only to Peano Arithmetic.
    
    \item This result had profound implications for Hilbert's program and the philosophy of mathematics. However, it is important to note that even if a statement is unprovable within a theory $T$, one can always construct an alternative theory $T'$ in which the statement is provable. For this reason, although Gödel's result undermines the project of a complete formalization of all mathematics, it does not affect Hilbert's parallel program aimed at the axiomatization of individual physical theories. Each physical theory can be axiomatized in such a way that either all relevant theorems are demonstrable within that particular framework, or an alternative axiomatization can be adopted in which the desired theorems become provable (see \citealt{Bunge1967}; \citealt{Corry2004}).
\end{itemize}

I will now compare Penrose's and Gödel's theorems, highlighting their similarities and differences. Based on this comparison, I will then attempt to draw some lessons regarding the misuse of the singularity theorems in physics.

\section{Comparison of the singularity and Gödel theorems}
\label{sec6}

Gödel's theorem concerns the limitations of formal systems, particularly those capable of expressing basic arithmetic. It demonstrates that no such system can be both consistent (free of contradictions) and complete (able to prove all true statements within the system). Penrose's singularity theorem, though seemingly distinct, also reveals limitations—this time in our capacity to represent spacetime within a physical theory, namely General Relativity\footnote{Actually, Penrose's theorem remains valid in any metric theory of spacetime rich enough to allow for trapped surfaces, global hyperbolicity, and a null energy condition.}. It shows that, under general conditions, the equations of GR lead inevitably to incomplete models, with a domain where the theory breaks down. Thus, both theorems identify fundamental boundaries to understanding: one in the realm of logic and mathematics, the other in our physical theories of the universe. Each prompted profound shifts in how we conceive the limits of knowledge and the structure of reality.

Although they arise in different domains—mathematical logic and theoretical physics—there are striking analogies, as well as important disanalogies, between the two. Gödel's incompleteness theorems apply broadly to any formal system containing basic arithmetic, independent of specific axioms. Similarly, Penrose's theorem applies without assuming symmetry (e.g., spherical symmetry), relying instead on the existence of trapped surfaces, energy conditions, and global causal structure. In both cases, a powerful theoretical framework reveals, from within, its own boundaries—its inability to account for all possible truths (in Gödel's case) or to describe all regions of spacetime (in Penrose's).

Gödel's theorem constructs undecidable statements but provides no method for resolving them within the system. Penrose's theorem shows that, given certain conditions, geodesic incompleteness is inevitable, yet it does not specify what actually happens in the singular region—it does not describe the nature of spacetime there, whether curvature diverges, or how matter behaves beyond certain point. In both cases, the theorems assert the existence of a breakdown or limit without offering a constructive resolution.

Despite these similarities, the differences between the theorems should not be understated. Their scopes are fundamentally distinct: Gödel's theorem pertains to mathematical logic, formal languages and their syntax, while Penrose's theorem is rooted in Lorentzian geometry, differential equations, and the physics of gravitation. Gödel's result is purely mathematical and internal to formal systems; Penrose's is physical and geometrical, representing features of the external world. The nature of incompleteness is also different: in Gödel's case, it is logical incompleteness—statements that are true but unprovable—whereas in Penrose's case, it is geodesic incompleteness—the impossibility of smoothly extending spacetime models beyond certain limits.

Moreover, the source of each breakdown is fundamentally different. Gödel's incompleteness arises from self-reference (e.g., statements like ``this statement is unprovable"; \citealt{Tarski1936}), whereas Penrose's singularities result from the causal and dynamical structure of spacetime, with no reliance on self-reference. This fact makes Gödel's incompleteness more essential in the sense that it affects whatever formal system we can construct, whereas Penrose's incompleteness affects only a class of models. In the latter case we can always construct models that, violating the assumptions of the theorem, remain regular over their entire domain. The current trend to search for non-singular cosmological models (for instance through a cosmic bounce, e.g. \citealt{Novello2008}) is an example.

\section{On the analogy: Intrinsic incompleteness in Mathematics and Physics}
\label{sec7}

At this point the reader might ask: why compare the singularity theorems of Penrose with the incompleteness theorems of G\"odel, specifically? The term ``incompleteness'' is used across disciplines, and one could plausibly contemplate a broader comparison class of ``no-go'' theorems, such as Bell's theorem in quantum mechanics \citep{Bell1964} or results on observational indistinguishability in general relativity \citep{Malament1977,Manchak2011}. While these are profound results in their own right, I argue that the analogy between G\"odel and Penrose is uniquely compelling due to a shared core structure: both establish an \textit{intrinsic formal incompleteness} within a well-defined system, which arises not from contingent limitations but from the system's own successful application to its own domain.

\subsection*{The nature of the incompleteness: Intrinsic vs. extrinsic limits}

The critical distinction lies in the source and character of the limitation identified by each theorem.

\begin{itemize}
    \item \textbf{G\"odel and Penrose: Intrinsic formal incompleteness.} Both theorems identify a fundamental, \textit{intrinsic} limitation that emerges \textit{from within} a formal system when it is applied to its own foundational domain.
    
    \begin{itemize}
        \item \textbf{G\"odel's Theorems:} For any consistent formal system $F$ sufficient for elementary arithmetic, there exists a proposition $G_F$ that is true but unprovable within $F$. The incompleteness is not an accidental feature but a necessary consequence of the system's own expressive power. The system, in a rigorous sense, points to truths it cannot contain.
        
        \item \textbf{Penrose's Singularity Theorems:} General Relativity, as a formal system based on Einstein's field equations, predicts that under generic conditions (e.g., gravitational collapse), its own predictive power necessarily breaks down---the spacetime model becomes singular. The theory contains the seeds of its own domain of invalidity. The existence of a trapped surface, derived from GR's own laws, logically implies a boundary beyond which those very laws cease to be predictive \citep{Penrose1965,Hawking1970}.
    \end{itemize}
    
    In both cases, the system's own rules lead to a meta-theoretic statement: ``There are limits to what this formalism can decide/describe.''\\

    \item \textbf{Bell's Theorem: An extrinsic constraint on theory selection.} In contrast, Bell's theorem \citep{Bell1964} is not a statement about a limit \textit{within} a single formal system. Rather, it is a constraint \textit{on which types of formal systems can correspond to experimental reality}. It rules out a whole class of theories (local hidden variable theories) that are logically \textit{compatible} with the predictions of quantum mechanics. It is a \textit{no-go theorem} that shapes the landscape of admissible theories from the \textit{outside}. It does not claim that quantum mechanics is incomplete in the G\"odelian sense; it tells us what kind of ``completion'' is empirically forbidden.\\

    \item \textbf{Malament-Manchak-Type Theorems: Epistemic limits for an observer.} Theorems concerning observational indistinguishability \citep{Malament1977,Manchak2011} address a different kind of limit: \textit{epistemic accessibility}. They demonstrate that an observer within a spacetime, making only local measurements, cannot determine its global structure (e.g., whether it is Minkowskian or a merely local patch of a more complex manifold). This is a profound limit on \textit{knowledge acquisition from within the system}, akin to a fundamental experimental limitation. However, it does not state that the theory of GR itself is formally incomplete or self-undermining. The global structure is still well-defined within the theory; it is simply unknowable to the internal observer.
\end{itemize}

\subsection*{A shared architectural role: Theorems of meta-theoretic limitation}

Beyond the specific mechanism of incompleteness, the G\"odel and Penrose theorems play a homologous, foundational role in their respective disciplines. They are both \textbf{theorems of meta-theoretic limitation} that fundamentally redefined the goals and self-conception of their fields.

\begin{itemize}
    \item In the \textbf{Foundations of Mathematics}, G\"odel's theorems ended the Hilbert Program's quest for a single, complete, and provably consistent axiomatic foundation for all mathematics. They transformed the philosophical landscape, forcing mathematicians to accept a universe of necessarily incomplete formal systems.
    \item In the \textbf{Foundations of Physics (GR)}, Penrose's singularity theorems performed a similar function. They shattered the hope that classical GR could be a complete, self-contained theory of gravity, free of fundamental pathologies. They provided a rigorous argument that the theory \textit{must} break down, thereby mandating the search for a more fundamental theory (quantum gravity) to understand the very domains that GR itself identifies.
\end{itemize}

Both results are ``singular'' in the sense of being exceptional; they are not merely technical results but boundary markers that define the limits of a formalism from within.

\subsection*{The deeper connection: Self-reference and predictive breakdown}

The deepest analogy lies in the theme of \textbf{self-reference leading to incompleteness}.
\begin{itemize}
    \item In G\"odel's case, the mechanism is \textit{logical self-reference} via the arithmetization of syntax, culminating in a proposition (the G\"odel sentence) that essentially states, ``I am not provable in this system.''
    \item In Penrose's case, it is a \textit{physical, geometric form of self-reference}: The theory of GR, when applied to its own predictions concerning the gravitational collapse of a massive object, generates a scenario (the formation of a trapped surface) that logically implies a region where the theory's own description of spacetime is no longer valid.
\end{itemize}

This creates a powerful parallel: Arithmetic, in describing its own syntax, finds truths it cannot reach; General Relativity, in describing its own domain of strong gravity, predicts a boundary beyond which it cannot see. It is this unique, self-referential structure of intrinsic incompleteness that justifies their pairing and makes their comparison a particularly fruitful philosophical endeavor.

\section{On the limits of knowledge through formal representations}
\label{sec8}

Gödel and Penrose results have sparked deep philosophical reflections on the nature of knowledge, formal representation, and the reach of human understanding \citep{Nagel1958,Penrose1989,Earman1995}.

Both theorems involve a kind of \textit{epistemic opacity}: there are truths or features that the theory implies must exist, but which the theory itself cannot describe in detail. It would be incorrect, however, to say that the theorems do not allow us to make predictions in their respective domains. Penrose's theorem allows us to predict total gravitational collapse and black hole formation from very general conditions. This confirms that the original predictions of \citet{Oppenheimer1939} do not depend on the assumption of spherical symmetry, and that black holes can form under a wide variety of conditions. Thus, Penrose's theorem not only demonstrates the breakdown of some spacetime models; it also predicts the existence of black holes when the hypotheses are fulfilled\footnote{Mathematically speaking trapped surfaces are not sufficient for the formation of black holes, as they could equally well lead to the formation of naked singularities (see \citealt{Joshi2007}). It is necessary to do quite some extra work in demonstrating that a form of weak cosmic censorship holds to conclude that (see \citealt{Landsman2022}).}, while simultaneously inviting us to seek new ways to represent what happens to matter in black hole interiors.

Gödel's theorem, in turn, allows us to predict that any attempt to construct a unique, complete, and consistent axiomatic formulation of mathematics is doomed to fail. This result shows that both the logicist and the initial formalist approaches to the foundations of mathematics are untenable. It does not preclude us, however, from formulating mathematics as a collection of consistent theories that together cover the entire field.

At a deeper level, these theorems illuminate different forms of incompleteness. In Gödel's case, the incompleteness is logical. In Penrose's case, the incompleteness is geometric. Gödel's theorems are internal to formal systems: they reveal limitations arising purely from within a symbolic framework. Penrose's theorem, by contrast, is external in a strong sense: it reflects limits in our physical representations of the world. As such, Gödel's result pertains to the epistemology of formal reasoning, while Penrose's concerns the ontology of spacetime as modeled by physical theory.

Nevertheless, both results challenge the Enlightenment ideal of unlimited rationality and formal completeness. They suggest that our best systems of representation—logical or physical—are inherently bounded. In this sense, they contribute to a broader philosophical narrative about the limits of scientific knowledge, and the possibility that certain aspects of reality may remain structurally or essentially beyond complete formalization or prediction \citep{Putnam1960,Rorty1979}. This, however, is not a hindrance to continually seeking more faithful representations of our universe. Progress is within our reach, although complete knowledge is not. I do not see a problem with that, since, after all, complete knowledge is useless for real understanding: to understand, we need to simplify, to disentangle the essential from the accessory or the irrelevant. As Borges wrote in his memorable short story ``Funes, el memorioso":

\begin{quotation}
He had effortlessly learned English, French, Portuguese, Latin. I suspect, however, that he was barely capable of thinking. Thinking means forgetting differences, generalizing, and abstracting. In Funes's crowded world, there were only details, almost immediate.\footnote{Author's translation of \citet{Borges1944}.}
\end{quotation}

\section{Final Remarks}
\label{sec9}

I have argued that Penrose's singularity theorem is sometimes misunderstood and mistakenly interpreted as a theorem of existence—specifically, as a proof of the existence of entities called singularities that are part of the ontology of the world. In contrast, I have shown that Penrose's result, along with related theorems for cosmological models, should be understood as an incompleteness theorem. These theorems demonstrate that, under broad and physically plausible conditions, spacetime models become geodesically incomplete and cannot be smoothly extended. In such cases, the models fail to provide a complete and accurate representation of spacetime. This failure signals not the presence of physical entities called ``singularities," but the breakdown of the theoretical framework—namely, general relativity—in extreme regimes. Accordingly, the theory should be supplanted by one that yields non-singular models under the same circumstances. Yet nothing ensures that such a theory can be formulated. Certain results from quantum field theory in curved spacetime suggest that the successor theory should be a quantum theory of spacetime—a so-called theory of quantum gravity. For instance, the prediction of Hawking radiation from black hole horizons implies that entropy can be attributed to regions of spacetime. But entropy is a property of systems with microscopic degrees of freedom, which suggests that spacetime itself may possess an underlying microstructure. The singularity theorems, however, remain silent on this possibility.

In this paper, I have explored the analogies and differences between incompleteness in spacetime theories and in formal logic. Both types of theorems impose intrinsic limits on our ability to represent the world through formal, mathematical means. Yet they also enable robust predictions and serve as signposts, directing us toward deeper and more comprehensive theories.

While Penrose's theorem directs us toward a specific, more fundamental physical theory (a theory of quantum gravity), the `signpost' nature of Gödel's theorems is of a different kind. They do not point to a single, complete system that resolves the issue of incompleteness—this is precisely what the theorems render impossible. Instead, they direct us towards a richer landscape of foundational exploration. This includes: (1) the recognition of an infinite hierarchy of ever-stronger formal systems (e.g., Zermelo-Fraenkel set theory, systems with large cardinal axioms) needed to settle statements undecidable in weaker systems; and (2) a more comprehensive philosophical understanding of mathematics itself, one which acknowledges the irreducible role of informal mathematical intuition and the open-ended, dynamic process of theory formation. In this sense, Gödel's theorems point us away from a certain kind of foundational finality and toward an endless, generative horizon of mathematical knowledge.

\section*{Acknowledgments}
I thank  anonymous reviewers for their helpful comments and suggestions. Any remaining errors are my own.

\appendix
\section{Derivation of the energy-momentum tensor in curved spacetime}
\label{app:A1}

The canonical definition of the energy-momentum tensor in a curved spacetime is as the functional derivative of the matter action \( S_{\text{M}} \) with respect to the metric field \( g^{\mu\nu} \). 

We begin with the action for the matter fields, which is given by the integral of the Lagrangian density over spacetime:
\[
S_{\text{M}} = \int d^4x \, \mathcal{L}_{\text{M}}.
\]
In a curved spacetime, the invariant volume element is \( d^4x \, \sqrt{-g} \), where \( g = \det(g_{\mu\nu}) \). Therefore, the correct action is:
\[
S_{\text{M}} = \int d^4x \, \sqrt{-g} \, \mathcal{L}_{\text{M}}.
\]

The energy-momentum tensor \( \Theta_{\mu\nu} \) is defined as the functional derivative of this action with respect to the metric:
\[
\Theta_{\mu\nu} \equiv -\frac{2}{\sqrt{-g}} \frac{\delta S_{\text{M}}}{\delta g^{\mu\nu}}.
\]
The negative sign is a convention related to the metric signature $(-,+,+,+)$, ensuring that $ \Theta_{00} $ gives a positive energy density. The factor of 2 is conventional and simplifies the final expression.

We now compute this functional derivative. The action $ S_{\text{M}} $ depends on the metric both explicitly through $ \sqrt{-g} $ and implicitly through the matter Lagrangian $ \mathcal{L}_{\text{M}} $, which may contain the metric and its derivatives (e.g., in covariant derivatives). Applying the product rule:
\begin{align*}
\frac{\delta S_{\text{M}}}{\delta g^{\mu\nu}} &= \frac{\delta}{\delta g^{\mu\nu}} \int d^4x \, \sqrt{-g} \, \mathcal{L}_{\text{M}} \\
&= \int d^4x \, \left[ \frac{\delta \sqrt{-g}}{\delta g^{\mu\nu}} \mathcal{L}_{\text{M}} + \sqrt{-g} \frac{\delta \mathcal{L}_{\text{M}}}{\delta g^{\mu\nu}} \right].
\end{align*}
Note: The functional derivative with respect to $ g^{\mu\nu} $ is defined such that $ \frac{\delta g^{\alpha\beta}(x)}{\delta g^{\mu\nu}(y)} = \delta^{\alpha}_{(\mu} \delta^{\beta}_{\nu)} \delta^{(4)}(x-y) $, where the parentheses denote symmetrization.

We compute the first functional derivative. Using the identity $ \delta g = g \, g^{\alpha\beta} \delta g_{\alpha\beta} = -g \, g_{\alpha\beta} \delta g^{\alpha\beta} $, and the chain rule $ \delta \sqrt{-g} = -\frac{1}{2\sqrt{-g}} \delta g $, we find:
\begin{align*}
    \frac{\delta \sqrt{-g}}{\delta g^{\mu\nu}} &= \frac{1}{\delta g^{\mu\nu}} \left( -\frac{1}{2\sqrt{-g}} \delta g \right) \\
    &= \frac{1}{\delta g^{\mu\nu}} \left( -\frac{1}{2\sqrt{-g}} \right)(-g \, g_{\alpha\beta} \delta g^{\alpha\beta}) \\
    &= -\frac{1}{2\sqrt{-g}} (-g \, g_{\mu\nu}) \\
    &= -\frac{1}{2} \sqrt{-g} \, g_{\mu\nu}.
\end{align*}
The last step follows because $ \frac{\delta g^{\alpha\beta}(x)}{\delta g^{\mu\nu}(y)} $ picks out the $ \mu\nu $ component.

Substituting this result back into the expression for $ \delta S_{\text{M}} / \delta g^{\mu\nu} $:
\[
\frac{\delta S_{\text{M}}}{\delta g^{\mu\nu}} = \int d^4x \, \sqrt{-g} \, \left[ -\frac{1}{2} g_{\mu\nu} \mathcal{L}_{\text{M}} + \frac{\delta \mathcal{L}_{\text{M}}}{\delta g^{\mu\nu}} \right].
\]

We can now plug this into the definition of $ \Theta_{\mu\nu} $:
\begin{align*}
\Theta_{\mu\nu} &\equiv -\frac{2}{\sqrt{-g}} \frac{\delta S_{\text{M}}}{\delta g^{\mu\nu}} \\
&= -\frac{2}{\sqrt{-g}} \int d^4x \, \sqrt{-g} \, \left[ -\frac{1}{2} g_{\mu\nu} \mathcal{L}_{\text{M}} + \frac{\delta \mathcal{L}_{\text{M}}}{\delta g^{\mu\nu}} \right] \\
&= -\frac{2}{\sqrt{-g}} \left( \sqrt{-g} \left[ -\frac{1}{2} g_{\mu\nu} \mathcal{L}_{\text{M}} + \frac{\delta \mathcal{L}_{\text{M}}}{\delta g^{\mu\nu}} \right] \right) \quad \text{(evaluated at a point)} \\
&= g_{\mu\nu} \mathcal{L}_{\text{M}} - 2 \frac{\delta \mathcal{L}_{\text{M}}}{\delta g^{\mu\nu}}.
\end{align*}
This is a common intermediate form.

If we define the \emph{Hilbert} energy-momentum tensor $ T_{\mu\nu} $ as:
\[
T_{\mu\nu} \equiv -\frac{2}{\sqrt{-g}} \frac{\delta (\sqrt{-g} \mathcal{L}_{\text{M}})}{\delta g^{\mu\nu}},
\]
then a direct computation yields the desired result:
\begin{align*}
T_{\mu\nu} &\equiv -\frac{2}{\sqrt{-g}} \frac{\delta (\sqrt{-g} \mathcal{L}_{\text{M}})}{\delta g^{\mu\nu}} \\
&= -\frac{2}{\sqrt{-g}} \left( \frac{\delta \sqrt{-g}}{\delta g^{\mu\nu}} \mathcal{L}_{\text{M}} + \sqrt{-g} \frac{\delta \mathcal{L}_{\text{M}}}{\delta g^{\mu\nu}} \right) \\
&= -\frac{2}{\sqrt{-g}} \left( -\frac{1}{2}\sqrt{-g} \, g_{\mu\nu} \mathcal{L}_{\text{M}} + \sqrt{-g} \frac{\delta \mathcal{L}_{\text{M}}}{\delta g^{\mu\nu}} \right) \\
&= g_{\mu\nu} \mathcal{L}_{\text{M}} - \frac{2}{\sqrt{-g}} \left( \sqrt{-g} \frac{\delta \mathcal{L}_{\text{M}}}{\delta g^{\mu\nu}} \right).
\end{align*}
If the matter Lagrangian $ \mathcal{L}_{\text{M}} $ does not depend on the derivatives of the metric (which is true for standard matter fields like scalar and gauge fields, where connection terms cancel out), we can take the factor of $ \sqrt{-g} $ out of the functional derivative. In this case, the term in parentheses becomes $ \sqrt{-g} \frac{\delta \mathcal{L}_{\text{M}}}{\delta g^{\mu\nu}} $, and we get the final formula:
\[
\boxed{\Theta_{\mu\nu} = \frac{2}{\sqrt{-g}} \frac{\delta \mathcal{L}_{\text{M}}}{\delta g^{\mu\nu}}}.
\]
Here, $ \Theta_{\mu\nu} $ is identified with the Hilbert tensor $ T_{\mu\nu} $. This expression is symmetric and gauge-invariant by construction.

\end{document}